\newcommand{\Rmnum}[1]{\expandafter\@slowromancap\romannumeral #1@}
\newtheorem{definition}{Definition}
\newtheorem{proposition}[definition]{Proposition}
\newtheorem{Lemma}[definition]{Lemma}
\newtheorem{Theorem}[definition]{Theorem}
\newtheorem{conjecture}[definition]{Conjecture}
\newtheorem{remark}[definition]{Remark}
\newtheorem{example}{Example}
\newtheorem{question}[definition]{Question}
\def\squareforqed{\hbox{\rlap{$\sqcap$}$\sqcup$}}
\def\qed{\ifmmode\squareforqed\else{\unskip\nobreak\hfil
		\penalty50\hskip1em\null\nobreak\hfil\squareforqed
		\parfillskip=0pt\finalhyphendemerits=0\endgraf}\fi}
\def\endenv{\ifmmode\;\else{\unskip\nobreak\hfil
		\penalty50\hskip1em\null\nobreak\hfil\;
		\parfillskip=0pt\finalhyphendemerits=0\endgraf}\fi}
\newenvironment{proof}{\noindent \textbf{{Proof.~} }}{\qed}
\def\Dbar{\leavevmode\lower.6ex\hbox to 0pt
	{\hskip-.23ex\accent"16\hss}D}
\def\url@leostyle{%
	\@ifundefined{selectfont}{\def\UrlFont{\sf}}{\def\UrlFont{\small\ttfamily}}}
\def\bcj{\begin{conjecture}}
	\def\ecj{\end{conjecture}}
\def\bcr{\begin{corollary}}
	\def\ecr{\end{corollary}}
\def\bd{\begin{definition}}
	\def\ed{\end{definition}}
\def\bea{\begin{eqnarray}}
\def\eea{\end{eqnarray}}
\def\bem{\begin{enumerate}}
	\def\eem{\end{enumerate}}
\def\bex{\begin{example}}
	\def\eex{\end{example}}
\def\bim{\begin{itemize}}
	\def\eim{\end{itemize}}
\def\bl{\begin{lemma}}
	\def\el{\end{lemma}}
\def\bma{\begin{bmatrix}}
	\def\ema{\end{bmatrix}}
\def\bpf{\begin{proof}}
	\def\epf{\end{proof}}
\def\bpp{\begin{proposition}}
	\def\epp{\end{proposition}}
\def\bqu{\begin{question}}
	\def\equ{\end{question}}
\def\br{\begin{remark}}
	\def\er{\end{remark}}
\def\bt{\begin{theorem}}
	\def\et{\end{theorem}}
\def\btb{\begin{tabular}}
	\def\etb{\end{tabular}}
\newcommand{\nc}{\newcommand}
\nc{\bbA}{\mathbb{A}} \nc{\bbB}{\mathbb{B}} \nc{\bbC}{\mathbb{C}}
\nc{\bbD}{\mathbb{D}} \nc{\bbE}{\mathbb{E}} \nc{\bbF}{\mathbb{F}}
\nc{\bbG}{\mathbb{G}} \nc{\bbH}{\mathbb{H}} \nc{\bbI}{\mathbb{I}}
\nc{\bbJ}{\mathbb{J}} \nc{\bbK}{\mathbb{K}} \nc{\bbL}{\mathbb{L}}
\nc{\bbM}{\mathbb{M}} \nc{\bbN}{\mathbb{N}} \nc{\bbO}{\mathbb{O}}
\nc{\bbP}{\mathbb{P}} \nc{\bbQ}{\mathbb{Q}} \nc{\bbR}{\mathbb{R}}
\nc{\bbS}{\mathbb{S}} \nc{\bbT}{\mathbb{T}} \nc{\bbU}{\mathbb{U}}
\nc{\bbV}{\mathbb{V}} \nc{\bbW}{\mathbb{W}} \nc{\bbX}{\mathbb{X}}
\nc{\bbZ}{\mathbb{Z}}
\nc{\bA}{{\bf A}} \nc{\bB}{{\bf B}} \nc{\bC}{{\bf C}}
\nc{\bD}{{\bf D}} \nc{\bE}{{\bf E}} \nc{\bF}{{\bf F}}
\nc{\bG}{{\bf G}} \nc{\bH}{{\bf H}} \nc{\bI}{{\bf I}}
\nc{\bJ}{{\bf J}} \nc{\bK}{{\bf K}} \nc{\bL}{{\bf L}}
\nc{\bM}{{\bf M}} \nc{\bN}{{\bf N}} \nc{\bO}{{\bf O}}
\nc{\bP}{{\bf P}} \nc{\bQ}{{\bf Q}} \nc{\bR}{{\bf R}}
\nc{\bS}{{\bf S}} \nc{\bT}{{\bf T}} \nc{\bU}{{\bf U}}
\nc{\bV}{{\bf V}} \nc{\bW}{{\bf W}} \nc{\bX}{{\bf X}}
\nc{\bZ}{{\bf Z}}
\nc{\cA}{{\cal A}} \nc{\cB}{{\cal B}} \nc{\cC}{{\cal C}}
\nc{\cD}{{\cal D}} \nc{\cE}{{\cal E}} \nc{\cF}{{\cal F}}
\nc{\cG}{{\cal G}} \nc{\cH}{{\cal H}} \nc{\cI}{{\cal I}}
\nc{\cJ}{{\cal J}} \nc{\cK}{{\cal K}} \nc{\cL}{{\cal L}}
\nc{\cM}{{\cal M}} \nc{\cN}{{\cal N}} \nc{\cO}{{\cal O}}
\nc{\cP}{{\cal P}} \nc{\cQ}{{\cal Q}} \nc{\cR}{{\cal R}}
\nc{\cS}{{\cal S}} \nc{\cT}{{\cal T}} \nc{\cU}{{\cal U}}
\nc{\cV}{{\cal V}} \nc{\cW}{{\cal W}} \nc{\cX}{{\cal X}}
\nc{\cZ}{{\cal Z}}
\nc{\hA}{{\hat{A}}} \nc{\hB}{{\hat{B}}} \nc{\hC}{{\hat{C}}}
\nc{\hD}{{\hat{D}}} \nc{\hE}{{\hat{E}}} \nc{\hF}{{\hat{F}}}
\nc{\hG}{{\hat{G}}} \nc{\hH}{{\hat{H}}} \nc{\hI}{{\hat{I}}}
\nc{\hJ}{{\hat{J}}} \nc{\hK}{{\hat{K}}} \nc{\hL}{{\hat{L}}}
\nc{\hM}{{\hat{M}}} \nc{\hN}{{\hat{N}}} \nc{\hO}{{\hat{O}}}
\nc{\hP}{{\hat{P}}} \nc{\hR}{{\hat{R}}} \nc{\hS}{{\hat{S}}}
\nc{\hT}{{\hat{T}}} \nc{\hU}{{\hat{U}}} \nc{\hV}{{\hat{V}}}
\nc{\hW}{{\hat{W}}} \nc{\hX}{{\hat{X}}} \nc{\hZ}{{\hat{Z}}}
\nc{\hn}{{\hat{n}}}
\def\dim{\mathop{\rm Dim}}
\def\min{\mathop{\rm min}}
\def\tr{\mathrm{ tr}}
\newcommand{\bra}[1]{\langle#1|}
\newcommand{\ket}[1]{|#1\rangle}
\newcommand{\norm}[1]{\lVert#1\rVert}
\def\Dbar{\leavevmode\lower.6ex\hbox to 0pt
	{\hskip-.23ex\accent"16\hss}D}
\begin{document}
	\title{The Deimginarity Cost of Quantum States}
	
	\author{Xian Shi}\email[]
	{shixian01@gmail.com}
	\affiliation{College of Information Science and Technology,
		Beijing University of Chemical Technology, Beijing 100029, China}

	%
	
	
	
	\date{\today}
	\begin{abstract}
	Here we address a task denoted as deimaginarity, which is to transform a state into a real state with the aid of random covariant-free unitary operations. We consider the minimum cost of randomness required for deimaginarity in the scenario of infinite copies and sufficiently small error, and we prove that the deimaginarity cost of a state $\rho$ is equal to its regularized relative entropy of imaginarity, which can be seen as an operational interpretation of the relative entropy of imaginarity. 
	\end{abstract}

	\pacs{03.65.Ud, 03.67.Mn}
	\maketitle

\section{Introduction}
Quantum resource theories provide a fancy view to address the properties of quantum systems and present their applications in various quantum information tasks \cite{gour2015resource,chitambar2019quantum,gour2024resources}. Among them, entanglement is one of the most fundamental resources \cite{horodecki2009quantum,friis2019entanglement,erhard2020advances}. By consuming entanglement, varieties of protocols can be implemented that are impossible in the view of classical information theory, such as dense coding \cite{bennett1992communication}, and quantum telecommunication \cite{bennett1993teleporting}. Recently, many other quantum resources have been developed,such as coherence \cite{streltsov2017colloquium,hu2018quantum}, asymmetry \cite{marvian2012symmetry}, magic \cite{howard2017application,veitch2014resource}, nonlocality \cite{brunner2014bell}, steering \cite{cavalcanti2016quantum,uola2020quantum} and imaginarity \cite{hickey2018quantifying}.

Due to the postulates of quantum mechanics \cite{Nielsen_Chuang_2010}, quantum information theory builds on the complex field. Many significant results showed the necessity and values of the imaginary part \cite{stueckelberg1960quantum,wootters2014rebit,renou2021quantum,chiribella2023positive}. Nevertheless, until recent years, imaginarity has been studied in the view of resource theory \cite{hickey2018quantifying,wu2021resource,wu2021operational,kondra2023real,wu2024resource,zhang2024broadcasting}.  A fundamental problem of resource theory is how to quantify the resourcefulness of a quantum state \cite{hickey2018quantifying,kondra2023real,xue2021quantification}. Among the commonly used quantifiers in the imaginarity resource, the method to quantify the imaginarity in terms of the relative entropy, the relative entropy of imaginarity, is considered in \cite{xue2021quantification}. It is a strong imaginarity monotone and additive \cite{xue2021quantification}. However, as far as I know, the operational interpretation of the relative entropy of imaginarity (REI) is unknown.

In this paper, we address the above problem by analyzing a task which is to transform a state into a real state with the aid of random free unitary operators. We consider the above task in the scenarios of infinite copies and vanishing errors, and present the minimum cost of randomness per copy to deimaginarity is equal to the REI. This task has been considered in the resource theory of multipartite correlations \cite{groisman2005quantum}, entanglement \cite{berta2018disentanglement}, coherence \cite{singh2015erasing}, asymmetry \cite{wakakuwa2017symmetrizing}, non-markovianity \cite{9214470}, and dynamical resources \cite{liu2019resource}. 

This paper is organized as follows. In Sec. \ref{2}, we review the definitions, properties, and measures of quantums state in terms of the imaginarity resource. In Sec. \ref{3}, we present the definition of the deimaginarity cost of a state and the main result of this article. In Sec. \ref{4}, we end with a conclusion.

\section{Preliminary Knowledge}\label{2}
\indent In this section, we recall the definitions of the imaginary of quantum states. We also review some quantifiers of the imaginary resource of quantum states.
\subsection{The Resource Theory of Imaginarity}

Here we consider a quantum system described by $\mathcal{H}$ with finite dimensions. We denote $\mathcal{L}(\mathcal{H})$ as the set of linear operators of $\mathcal{H}$ and $\mathcal{D}(\mathcal{H})$ as the set of states on $\mathcal{H}.$ Before reviewing the resource theory of imaginarity \cite{hickey2018quantifying}, we fix an orthonormal reference basis $\{\ket{i}\}_{i=0}^{d-1}$ of $\mathcal{H}$ with its dimension $d$.

In the imaginarity theory, the real states are called the free states, which is the set $\mathcal{S}_R$ of a quantum state $\rho$ with a real density matrix, 
\begin{align}
\rho=\sum_{j,k}\rho_{jk}\ket{j}\bra{k},
\end{align}
here $\rho_{jk}\in\mathbb{R}.$ In this manuscript, we denote the set of all real density matrices as $\mathcal{F}.$ As all density matrices are Hermitian, a state $\rho\in \mathcal{F}$ if and only if $\rho^T=\rho.$ Due to the definitions of $\mathcal{F},$  $\mathcal{F}$ is convex and affine. Besides, the set $\mathcal{F}$ depends on the orthonormal reference basis $\{\ket{i}\}_{i=0}^{d-1}$.

Next we present the formulation of the free operations of this resource theory. In general, quantum channels can be represented by a set of Kraus operations $\{K_i\}$ with $\sum_iK_i^{\dagger}K_i=\mathbb{I}$. The free operations of imaginarity theory can be specified by the Kraus operations $\{K_i\}$ with $\bra{j}K_i\ket{k}\in\mathbb{R}$, here $\ket{j}$ and $\ket{k}$ takes over all the elements in the basis $\{\ket{i}\}_{i=0}^{d-1}$. Due to the properties of free operations of imaginarity theory, the real operations cannot creat imginarity from real states.

Assume $U$ is a unitary operator acting on the system $\mathcal{H}$, $\Lambda(\cdot)=U\cdot U^{\dagger}$ is RNG if and only $\Lambda(\cdot)=O\cdot O^T$, here $O$ is a real orthogonal operator. Let $\Theta(\cdot)=\frac{(\cdot)^T+(\cdot)}{2}$. When $\Lambda(\cdot)$ satisfies $\Lambda\circ\Theta=\Theta\circ\Lambda$, $\Lambda(\cdot)$ is covariant. Next if $\Omega(\cdot)=O(\cdot)O^T$ is covariant, and $\sigma$ is a free state 
\begin{align*}
\Omega(\sigma)=&\frac{1}{2}[\Omega(\sigma^T)+\Omega(\sigma)]\\
=&\Omega\circ\Theta(\sigma)\\
=&\Theta\circ\Omega(\sigma)\\
=&\frac{1}{2}[\Omega(\sigma)^T+\Omega(\sigma)],
\end{align*}
that is, $\Omega(\sigma)=\Omega(\sigma)^T$. Hence, $\Omega(\sigma)$ is a free state. That is, a covariant-free unitary operation cannot transform a free state into a resourceful state. 

Before we introduce the process of deimaginarity of a generic state, we present the process of deimaginarity of the maximally resourceful state, $\ket{\phi}=\frac{1}{\sqrt{2}}(\ket{0}+i\ket{1})$. The deimaginary process can be achieved by two unitary operators $\mathbb{I}$ and $\sigma_z$ with equal probability,
\begin{align*}
	\ket{\phi}\bra{\phi}\longrightarrow \frac{1}{2}(\ket{\phi}\bra{\phi}+\sigma_z\ket{\phi}\bra{\phi}\sigma_z)=\frac{1}{2}\mathbb{I}_2.
\end{align*}
Note that $\mathbb{I}_2\in\mathcal{F}$, and $\Theta(\sigma_z\cdot\sigma_z)=\sigma_z\Theta(\cdot)\sigma_z.$ That is, by the application of two covariant unitary operations with equal probability, we could erase the imaginarity of the maximally resourceful state. The above effects can also be implemented for a $d$ dimensional quantum system. Let $\mathcal{H}_d$ is a $d$-dimensional system, any state $\rho$ acting on $\mathcal{H}_d$ can be decomposed as 
\begin{align*}
\rho=Re(\rho)+iIm(\rho),
\end{align*}
here $Re(\rho)=\frac{\rho+\rho^T}{2}$ is a real symmetric quantum state, and $Im(\rho)=\frac{1}{2i}(\rho-\rho^T)$ is a real anti-symmetric state. Moreover, there exists a real $O$ \cite{horn} such that
\begin{align*}
	OIm(\rho)O^T=\mathrm{0}_{d-2r}\bigoplus_{k=1}^r\lambda_k\begin{pmatrix}
		0&1\\
		-1&0
	\end{pmatrix},
\end{align*}
where $\lambda_k>0.$ For a real orthogonal matrix $O$, we have 
\begin{align}
	\Theta(O\cdot O^T)=O\Theta(\cdot)O^T,
\end{align}
Besides, for a state $\rho$ of two-dimensional systems, we have
\begin{align}
	\frac{1}{4}\sum_{j,k=1}^{2}\sigma_x^k\sigma_z^j\rho{\sigma_z^j}^{\dagger}{\sigma_x^k}^{\dagger}=\frac{1}{2}\mathbb{I}_2.
\end{align}
That is, there always exists an ensemble of covariant-free unitaries to erase the imaginarity of a quantum state.

Next we present the resource theory of imaginarity of the composite systems. Assume $\mathcal{H}$ is a Hilbert space with reference orthogonal basis $\{\ket{i}\}$, let $\mathcal{H}^{\otimes n}$ be a system composed of $n$ duplicates of $\mathcal{H}$ with reference orthogonal basis $\{\ket{i_1i_2\cdots i_n}\}$. A state $\rho$ acting on $\mathcal{H}^{\otimes n}$ is free in terms of the imaginarity resource if and only if $\rho^T=\rho$, here $(\cdot)^T$ is with respect to the basis $\ket{i_1i_2\cdots i_n}$. As $\mathcal{H}^{\otimes n}$ is a composite space composed of $n$ duplicates of $\mathcal{H}$, the transpose operator $(\cdot)^T$ of $\mathcal{H}^{\otimes n}$ can be seen as $(\cdot)^T=\otimes_k(\cdot)^{T_k}$, here $(\cdot)^{T_k}$ is the transpose operator of $\mathcal{H}_k$ with repect to the reference basis $\{\ket{i}_k\}.$ 

\subsection{Relative Entropy of the Resource Theory of Imaginarity}
\indent To quantify the imaginarity of a quantum state, Xue $et$ $al.$ considered a quantifier called the relative entropy of imaginarity (REI) $\mathcal{I}_r(\cdot)$ of a quantum state \cite{xue2021quantification},
\begin{align}
\mathcal{I}_r(\rho)=\min_{\sigma\in\mathcal{R}}D(\rho||\sigma),\label{f11}
\end{align}
where the minimum takes over all the real states $\sigma$, and $D(\rho||\sigma)=\tr[\rho(\log\rho-\log\sigma)].$ Moreover, the REI $\mathcal{I}_r(\rho)$ can be written as 
\begin{align}
\mathcal{I}_r(\rho)=S(Re(\rho))-S(\rho),\label{f12}
\end{align} 
where $Re(\rho)=\frac{\rho+\rho^T}{2}$, and $S(\rho)=-\tr(\rho\log\rho)$. The REI is additive.

Up to now, there exists few results on the REI, $\mathcal{I}_r(\cdot)$, here we present a direct operational meaning of the REI.

\section{Main Results}\label{3}
Assume $\rho$ is a state of a system with finite dimensions. A basic task is to transform a state $\rho$ to a real state by applying covariant-free unitary operations randomly. A fundamental problem is to minimize the cost of randomness to make the imaginary state free.

\begin{figure}
		\centering
	\includegraphics[width=90mm]{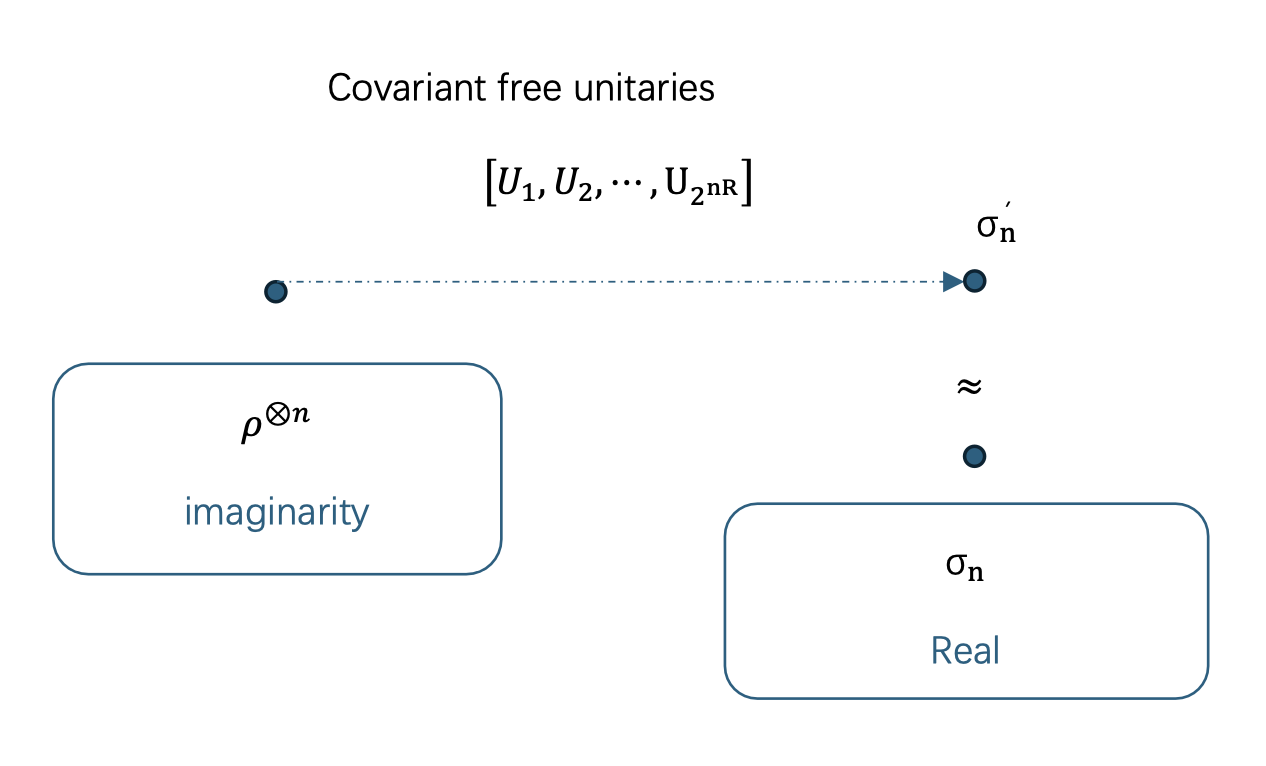}\\
	
	\captionsetup{justification=raggedright,singlelinecheck=false}
	
	\caption{The task of deimginarity. Here $n$-copies of a state $\rho$ is transformed by a set of covariant-free unitaries $\{U_1,U_2,\cdots,U_{2^{nR}}\}$. This task requires the final state $\sigma_n^{'}$ is real with respect to the reference orthonormal basis up to a small $\epsilon>0$ in terms of the $1$-norm. The deimaginarity cost of a state $\rho$ with respect to the reference basis is defined as the minimum $R$ such that the above task is completed by properly choosing $2^{nR}$ covariant-free unitaries and free states $\sigma_n$ when $n\rightarrow\infty.$}\label{fig1}
\end{figure}

\begin{definition}
	Assume $\rho$ is a state of $\mathcal{H}$ with finite dimensions. A rate $R$ is denoted to be achievable to make $\rho$ free with respect to the reference basis $\{\ket{i}\}_{i=0}^{d-1}$ of $\mathcal{H}$ if, $\forall\epsilon>0,$ there exists a large $n$ such that there exists a free state $\sigma_n$ and a set of real ortthogonal operators $\{O_k\}_{k=1}^{2^{nR}}$ with $O_k\Theta(\cdot)O_k^{\dagger}=\Theta(O_k\cdot O_k^{\dagger})$ such that 
	\begin{align*}
	\norm{\mathcal{O}_n(\rho^{\otimes n})-\sigma_n}_1\le\epsilon,
	\end{align*}
	here 
	\begin{align*}
	\mathcal{O}_n(\gamma)=\frac{1}{{2^{nR}}}\sum_{k=1}^{2^{nR}}O_k\gamma O_k^{\dagger}.
	\end{align*}
	
	The deimaginarity cost of a state $\rho$ with respect to the reference basis $\{\ket{i}\}_{i=0}^{d-1}$ is 

	\begin{align*}
	&C_d(\rho)=\\
	&\inf\{R|\textit{$R$ is achievable to make $\rho$ free w. r. t. $\{\ket{i}\}_{i=0}^{d-1}$}\}.
	\end{align*}

\end{definition}

Here the schematic diagram of the task of deimaginarity cost of a state is plotted in Fig \ref{fig1}.

The main result of this article is to present the analytical representation of the deimaginarity cost of a state $\rho$, $C_d(\rho)$. 

\begin{Theorem}\label{t1}
	Assume $\rho$ is a state acting on the Hilbert space $\mathcal{H},$ then
	\begin{align*}
	C_d(\rho)=\mathcal{I}_r^{\infty}(\rho)=:\lim_{n\rightarrow \infty}\frac{1}{n}\mathcal{I}_r(\rho^{\otimes n})
	\end{align*}
\end{Theorem}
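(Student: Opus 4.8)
The plan is to prove $C_d(\rho)\ge\mathcal{I}_r^{\infty}(\rho)$ and $C_d(\rho)\le\mathcal{I}_r^{\infty}(\rho)$ separately, arguing each ``per copy in the limit $n\to\infty$''. I would first record that $n\mapsto\mathcal{I}_r(\rho^{\otimes n})$ is subadditive: since $\mathrm{Re}(\sigma)\otimes\mathrm{Re}(\tau)$ is real, $\mathcal{I}_r(\sigma\otimes\tau)\le D(\sigma\otimes\tau\,\|\,\mathrm{Re}(\sigma)\otimes\mathrm{Re}(\tau))=\mathcal{I}_r(\sigma)+\mathcal{I}_r(\tau)$, so the defining limit $\mathcal{I}_r^{\infty}(\rho)=\lim_n\frac1n\mathcal{I}_r(\rho^{\otimes n})=\inf_n\frac1n\mathcal{I}_r(\rho^{\otimes n})$ exists by Fekete's lemma.

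For the converse, fix an achievable $R$ and, for a given $\epsilon>0$, the $n$-copy protocol $\mathcal{O}_n$ with $\|\mathcal{O}_n(\rho^{\otimes n})-\sigma_n\|_1\le\epsilon$ and $\sigma_n$ real. Three facts combine. (i) $\mathcal{O}_n$ is a uniform mixture of $2^{nR}$ unitary conjugates of $\rho^{\otimes n}$, so the ``entropy of a mixture'' bound gives $S(\mathcal{O}_n(\rho^{\otimes n}))\le S(\rho^{\otimes n})+nR$. (ii) Each $O_k$ is real, hence $\Theta\circ\mathcal{O}_n=\mathcal{O}_n\circ\Theta$, so $\mathrm{Re}(\mathcal{O}_n(\rho^{\otimes n}))=\mathcal{O}_n(\mathrm{Re}(\rho^{\otimes n}))$ is the image of $\mathrm{Re}(\rho^{\otimes n})$ under a unital (mixture-of-unitaries) channel, whence $S(\mathrm{Re}(\mathcal{O}_n(\rho^{\otimes n})))\ge S(\mathrm{Re}(\rho^{\otimes n}))$. (iii) Since $\mathcal{O}_n(\rho^{\otimes n})$ is $\epsilon$-close to the real state $\sigma_n$ it is also $\epsilon$-close to its own real part, so Fannes' inequality gives $|S(\mathrm{Re}(\mathcal{O}_n(\rho^{\otimes n})))-S(\mathcal{O}_n(\rho^{\otimes n}))|\le n\epsilon\log d+h(\epsilon)$. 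Chaining (i)--(iii) with $\mathcal{I}_r(\rho^{\otimes n})=S(\mathrm{Re}(\rho^{\otimes n}))-S(\rho^{\otimes n})$ yields $nR\ge\mathcal{I}_r(\rho^{\otimes n})-n\epsilon\log d-h(\epsilon)$; dividing by $n$, letting $n\to\infty$ and then $\epsilon\to0$ (using that achievability is quantified as ``$\forall\epsilon\ \exists n$'') gives $R\ge\mathcal{I}_r^{\infty}(\rho)$.

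For achievability I would use the real canonical form of the imaginary part recalled in Sec.~\ref{2}. For $\rho^{\otimes n}$ there is a real orthogonal $O$ with $O\,\mathrm{Im}(\rho^{\otimes n})\,O^{T}=0\oplus\bigoplus_k\lambda_k\left(\begin{smallmatrix}0&1\\-1&0\end{smallmatrix}\right)$, and conjugating each $2\times2$ block by $\mathrm{diag}(1,-1)$ negates it; hence there is a single real orthogonal $P_n$ with $P_n\,\mathrm{Im}(\rho^{\otimes n})\,P_n^{T}=-\mathrm{Im}(\rho^{\otimes n})$. The two-outcome free operation $\mathcal{O}_n(\cdot)=\tfrac12\bigl((\cdot)+P_n(\cdot)P_n^{T}\bigr)$ maps $\rho^{\otimes n}$ to $\tfrac12(\rho^{\otimes n}+P_n\rho^{\otimes n}P_n^{T})$, whose imaginary part vanishes by construction, i.e. it is already a (real, hence free) state, reached with only two unitaries; padding the ensemble with copies of $\mathbb{I}$ and of $P_n$ shows every $R>0$ is achievable, so $C_d(\rho)\le 0$. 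To see this matches the right-hand side, note $\mathrm{Re}(\rho^{\otimes n})=\tfrac12(\rho^{\otimes n}+\overline{\rho^{\otimes n}})$ is a balanced mixture of two isospectral states, so $S(\mathrm{Re}(\rho^{\otimes n}))\le S(\rho^{\otimes n})+\log 2$ and therefore $\mathcal{I}_r(\rho^{\otimes n})\le\log 2$ for all $n$, giving $\mathcal{I}_r^{\infty}(\rho)=0$. Combined with the converse, $C_d(\rho)=\mathcal{I}_r^{\infty}(\rho)=0$.

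The step I expect to be the genuine obstacle is achievability at the correct rate rather than a larger one. The ``natural'' resource-theoretic construction generalizing the cited erasure results — typical-project $\rho^{\otimes n}$ onto the eigenspace $V$ of $\mathrm{Re}(\rho^{\otimes n})$ (legitimate since $\mathrm{supp}(\rho^{\otimes n})\subseteq\mathrm{supp}(\mathrm{Re}(\rho^{\otimes n}))$ and $\mathrm{Tr}(\Pi_V\rho^{\otimes n})=\mathrm{Tr}(\Pi_V\mathrm{Re}(\rho^{\otimes n}))\to1$), then twirl the projected state over a random family of real orthogonals supported on $V$, bounding the residual imaginary part by an operator-Chernoff (Ahlswede--Winter) estimate — tends to erase the whole state and so only delivers the wasteful rate $\approx\frac1n S(\mathrm{Re}(\rho^{\otimes n}))=S(\rho)+\mathcal{I}_r^{\infty}(\rho)$; matching the converse forces one to destroy only the antisymmetric (imaginary) content while leaving the ``real part'' intact, which is exactly what the canonical-form construction above achieves without slack, so I would present that route. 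A secondary care point is the limit interchange in the converse, handled as indicated by exploiting the quantifier order in the definition of an achievable rate.
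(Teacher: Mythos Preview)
Your converse is essentially the paper's argument: both extract the bound $S(\mathcal{O}_n(\rho^{\otimes n}))\le nS(\rho)+nR$ (the paper obtains it via a purification and subadditivity, you via the entropy-of-mixture inequality directly), then invoke Fannes' inequality, then lower-bound $S(\Theta(\mathcal{O}_n(\rho^{\otimes n})))$ by $S(\Theta(\rho^{\otimes n}))$ (the paper uses concavity plus its Lemma~\ref{l6}, you use that unital channels do not decrease entropy --- cleaner, but equivalent).

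Your achievability, however, is completely different from the paper's and far more revealing. The paper follows the Groisman--Wakakuwa template: project onto the typical subspace of $\rho^{\otimes n}$, sample covariant-free unitaries from an ensemble that randomizes that subspace to its maximally mixed state, and bound the deviation via the operator Chernoff bound (Lemma~\ref{ocb}); after bookkeeping this yields $C_d(\rho)\le\mathcal{I}_r^{\infty}(\rho)$. You instead observe directly that a single real orthogonal $P_n$, built from the canonical form of $\mathrm{Im}(\rho^{\otimes n})$, satisfies $P_n\,\mathrm{Im}(\rho^{\otimes n})\,P_n^{T}=-\mathrm{Im}(\rho^{\otimes n})$, so $\tfrac12(\rho^{\otimes n}+P_n\rho^{\otimes n}P_n^{T})$ is \emph{exactly} real using just two free unitaries, forcing $C_d(\rho)=0$ for every $\rho$. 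Paired with your equally elementary observation that $\mathrm{Re}(\rho^{\otimes n})=\tfrac12\bigl(\rho^{\otimes n}+(\rho^{T})^{\otimes n}\bigr)$ is a balanced mixture of two isospectral states, hence $\mathcal{I}_r(\rho^{\otimes n})\le\log 2$ and $\mathcal{I}_r^{\infty}(\rho)=0$, this proves the theorem with no asymptotic machinery at all.

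What your route buys is not only simplicity: it shows that the statement is vacuous as written, since both sides are identically zero on every state. This is consistent with your converse (which then merely says $R\ge 0$) and, incidentally, contradicts the paper's assertion that $\mathcal{I}_r$ is additive --- for $\ket{\phi}=\tfrac1{\sqrt2}(\ket{0}+i\ket{1})$ one has $\mathcal{I}_r(\proj{\phi})=1$ while your bound gives $\mathcal{I}_r(\proj{\phi}^{\otimes n})\le 1$ for all $n$. So your proof is correct and strictly more elementary than the paper's, but it also exposes that the advertised operational interpretation collapses: the regularized quantity being characterized is always zero.
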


The proof of this theorem is placed in Sec. \ref{app}
\section{Conclusion}\label{4}
In this paper, we considered the task of deimaginarity, and analyzed the minimal cost of randomness needed to make a state real. Particularly, we addressed the transformation task in the scenarios of an asymptotic limit of infinite copies and vanishing error, and we showed the minimum cost of the randomness for a state to make it real is equal to the regularized relative entropy of imaginarity of the state. At last, it would be meaningful to study the task in terms of other resources of some quantum system, such as steering \cite{cavalcanti2016quantum,uola2020quantum} and contextuality \cite{budroni2022kochen,pavicic2023quantum}.
  \section{Acknowledgement}
X. S. was supported by the National Natural Science Foundation of China (Grant No. 12301580).
\bibliographystyle{IEEEtran}
\bibliography{ref}

\setcounter{equation}{0}

\renewcommand\theequation{A.\arabic{equation}}
\section{Appendix}\label{app}
\subsection{Proof of Theorem \ref{t1}}
Here we prove Theorem \ref{t1} separablely, first we prove the converse part $C_d(\rho)\ge \mathcal{I}^{\infty}_r(\rho)$ first, then we show $C_d(\rho)\le \mathcal{I}^{\infty}_r(\rho)$. Combing the proof of the both parts, we finish the proof of Theorem \ref{t1}.
\subsubsection{Converse Part}
\indent Assume $R\ge C_d(\rho)$, for any $\epsilon>0$ and sufficiently large $n$, there exists a free state $\sigma_n\in \mathcal{F}(\mathcal{H}^{\otimes n})$ and a set of unitaries $\{O_k\}_{k=1}^{2^{nR}}$ such that 
\begin{align}
\norm{\frac{1}{2^{nR}}\sum_{k=1}^{2^{nR}}O_k\rho O_k^{T}-\sigma_n}_1\le \epsilon,
\end{align}

here $O_k$ satisfies $\Theta(O_k\cdot O_k^T)=O_k\Theta_n(\cdot)O_k^T.$ Next let $\mathcal{O}_n(\cdot)=\frac{1}{2^{nR}}\sum_{k=1}^{2^{nR}}O_k\cdot  O_k^{T}$ ,as $\sigma_n\in \mathcal{F}(\mathcal{H}_n)$, $\Theta(\sigma_n)=\sigma_n$,
\begin{align}
&\norm{(\Theta\circ\mathcal{O}_n)(\rho^{\otimes n})-\mathcal{O}_n(\rho^{\otimes n})}_1\\
\le&\norm{(\Theta\circ\mathcal{O}_n)(\rho^{\otimes n})-\sigma_n}_1+\norm{\mathcal{O}_n(\rho^{\otimes n})-\sigma_n}_1\\
=& \norm{(\Theta\circ\mathcal{O}_n)(\rho^{\otimes n})-\Theta(\sigma_n)}_1+\norm{\mathcal{O}_n(\rho^{\otimes n})-\sigma_n}_1\\
\le& 2\norm{\mathcal{O}_n(\rho^{\otimes n})-\sigma_n}_1\le 2\epsilon,\label{l1}
\end{align}

here the first inequality is due to the triangle inequality of the $1$-norm, the equality is due to that $\sigma_n$ is free, the second inequality is due to the contractive property of the trace-preserving quantum operations under the $1$-norm.

Assume $E$ is an ancillary system with dimension $2^{nR}$, and $\mathcal{T}:\mathcal{H}_A^{\otimes n}\rightarrow \mathcal{H}_E\otimes\mathcal{H}_A^{\otimes n}$,
\begin{align*}
\mathcal{T}=\frac{1}{2^{nR}}\sum_{k=1}^{2^{nR}}\ket{k}_E\otimes O_k^{A^n},
\end{align*}

Next let $\ket{\psi}_{AZ}$ is a purification of $\rho,$ let
\begin{align*}
\ket{\overline{\psi}_n}_{EA^nZ^n}=\mathcal{T}\ket{\psi^{\otimes n}}_{A^nZ^n},
\end{align*}

then
\begin{align}
nR=&\log\dim\mathcal{H}_E\nonumber\\
\ge& S(E)_{\overline{\psi}_n}\nonumber\\
\ge& S(EZ_n)_{\overline{\psi}_n}-S(Z_n)_{\overline{\psi}_n}\nonumber\\
=&S(\mathcal{O}_n(\rho^{\otimes n}))-nS(\rho),\label{l2}
\end{align}

here the second inequality is due to the subadditivity of von Neumann entropy $S(\cdot)$, the second equality follows from
\begin{align*}
S(EZ_n)_{\overline{{\psi}_n}}=&S(A_n)_{\overline{\psi}_n}=S(\mathcal{O}_n(\rho^{\otimes n})),\\
S(Z_n)_{\overline{{\psi}_n}}=&S(Z^n)_{\psi^{\otimes n}}=S(\rho^{\otimes n})=nS(\rho).
\end{align*}

Next due to the inequality (\ref{l1}) and the Fannes inequality \cite{fannes1973continuity}, we have
\begin{align}
S(\mathcal{O}_n(\rho^{\otimes n}))\ge S((\Theta\circ\mathcal{O}_n)(\rho^{\otimes n}))-n\eta(2\epsilon)\log d,\label{l21}
\end{align}

here $\eta(\cdot)$ satisfies $\lim_{x\rightarrow 0}\eta(x)=0.$  Then due to the concavity of the von Neumann entropy, we have
\begin{align}
&S(\mathcal{O}_n(\rho^{\otimes n}))\\
\ge& \frac{1}{2^{nR}}\sum_{k=1}^{2^{nR}}S(\Theta(O_k\rho^{\otimes n}O_k^{T}))\\
\ge&S(\Theta(\rho^{\otimes n})),\label{l3}
\end{align}
here the first inequality is due to the concavity of $S(\cdot)$, the second inequality is due to that $O_k$ is covariant-free and Lemma \ref{l6}.

Combing (\ref{l2}), (\ref{l21}), and (\ref{l3}), we have
\begin{align}
nR\ge S(\Theta(\rho^{\otimes n}))-nS(\rho)-n\eta(2\epsilon)\log d,
\end{align}
that is,
\begin{align}
R\ge \lim_{n\rightarrow \infty}\frac{1}{n}\mathcal{I}_r(\rho^{\otimes n})-\eta(2\epsilon)\log d.
\end{align}

As the above inequality holds for all $R\ge C_d(\rho)$ and arbitrarily small $\epsilon>0$, we finish the proof of the converse part.

\subsubsection{Direct Part}
\indent In this subsection, we will present the proof of the following inequality,
\begin{align*}
C_d(\rho)\le \lim_{n\rightarrow \infty}\frac{1}{n}\mathcal{I}_r(\rho^{\otimes n}).
\end{align*}
The method of this proof is similar to the one shown in \cite{groisman2005quantum,wakakuwa2017symmetrizing}. The preliminary knowledge needed here are placed in Sec. \ref{appm}.

Assume $\mathcal{H}_{n,\delta}$ is the $\delta$-typical subspacces with respect to $\rho$. $\Pi_{n,\delta}$ is the projection onto the space $\mathcal{H}_{n,\delta}$. Due to the properties of $\delta$-typical subspace presented in Sec. \ref{appm}, for any $\epsilon\ge0,$ there exists a large $n$ such that 
\begin{align}
\tr[\Pi_{n,\delta}\rho^{\otimes n}\Pi_{n,\delta}]\ge 1-\epsilon,\label{f1}\\
2^{n(S(\rho)-\delta)}\Pi_{n,\delta}\rho^{\otimes n}\Pi_{n,\delta}\le\Pi_{n,\delta}.\label{f2}
\end{align}

Based on $(\ref{f1})$ and the Gentle Measurement Lemma \ref{l5}, we have
\begin{align}
	\norm{\rho^{\otimes n}-\Pi_{n,\delta}\rho^{\otimes n}\Pi_{n,\delta}}_1\le 2\sqrt{2}\epsilon.\label{fg}
\end{align}

Here we denote ${D}_{n,\delta}$ as the dimension of $\mathcal{H}_{n,\delta}$, then 
\begin{align*}
 {D}_{n,\delta}\le& 2^{n(S(\rho)+\delta))}\\
 \le& 2^{S(\Theta(\rho^{\otimes n}))+n\delta}.
\end{align*}

The last inequality is due to the additivity of $S(\cdot)$ for product states and Lemma \ref{l4}.

Assume $\{p(dU),U\}$ is an ensemble of the covariant-free unitary operators such that for any state $\theta$ acting on the Hilbert space $\mathcal{H}_{n,\delta}$, $\int p(dU)U\theta U^{\dagger}=\frac{1}{D_{n,\delta}}\mathbb{I}_{n,\delta}$, here $\mathbb{I}_{n,\delta}$ is the identity acting on the space $\mathcal{H}_{n,\delta}$.
Hence, we have 
\begin{align}
\int_U p(dU)U\Pi_{n,\delta}\rho^{\otimes n}\Pi_{n,\delta}U^{\dagger}=\frac{\mu_{n,\delta}}{D_{n,\delta}}\mathbb{I}_{n,\delta},
\end{align}
here $\mu_{n,\delta}=\tr(\Pi_{n,\delta}\rho^{\otimes n}\Pi_{n,\delta}).$

 Next let us denote $W=U\Pi_{n,\delta}\rho^{\otimes n}\Pi_{n,\delta}U^{\dagger}$ as random operators with its distribution $p(dU)$. Let $D^{'}_{n,\delta}=2^{n(S(\rho)-\delta)}.$ Based on (\ref{f2}), we have 
 \begin{align}
 W^{*}:=D^{'}_{n,\delta}W=& 2^{n(S(\rho)-\delta)} U\Pi_{n,\delta}\rho^{\otimes n}\Pi_{n,\delta}U^{\dagger}\nonumber\\
 \le& U\Pi_{n,\delta}U^{\dagger},
 \end{align}
 hence $W^{*}\in[0,\mathbb{I}]$.
  Let 
 \begin{align}
 \mathbb{E}(W^{*})=&D_{n,\delta}^{'}\int_Up(dU)U\Pi_{n,\delta}\rho^{\otimes n}\Pi_{n,\delta}U^{\dagger}\nonumber,
 \end{align}
then we have the minimum nonzero eigenvalue $\lambda_{n,\delta}$ of $\mathbb{E}(W^{*})$ is bounded by
\begin{align}
\lambda_{n,\delta}\ge 2^{-(\mathcal{I}_r(\rho^{\otimes n})+2n\delta)}(1-\epsilon).
\end{align}

If $W_1,W_2,\cdots,W_N$ are $N$ \emph{i.i.d.} $W^{*}$, then by Lemma \ref{ocb}, we have
\begin{align*}
&P[(1-\epsilon)\mathbb{E}(W^{*})\le \overline{W}\le (1+\epsilon)\mathbb{E}(W^{*})]\\
\ge& 1-2d^ne^{-\frac{N\lambda_{n,\delta}\epsilon^2}{2}},
\end{align*}
Denote $N=2^{\mathcal{I}_r(\rho^{\otimes n})+3n\delta}$, the right hand side of the above inequality is nonzero when $n$ is sufficiently large. Then there exists $\frac{1}{N}\sum_i W_i$ such that 
\begin{align}
\norm{\frac{1}{N}\sum_iU_i\Pi_{n,\delta}\rho^{\otimes n}\Pi_{n,\delta}U_i^{\dagger}-E(W)}_1\le \epsilon,\label{ff}
\end{align} 
here $E(W)$ is real, hence, we have
\begin{align*}
&\norm{\frac{1}{N}\sum_iU_i\rho^{\otimes n}U_i^{\dagger}-\frac{1}{\mu_{n,\delta}}\mathbb{E}(W)}_1\\
\le&\norm{\frac{1}{N}\sum_iU_i\Pi_{n,\delta}\rho^{\otimes n}\Pi_{n,\delta}U_i^{\dagger}-\frac{1}{N}\sum_iU_i\rho^{\otimes n}U_i^{\dagger}}\nonumber\\
+&\norm{\frac{1}{N}\sum_iU_i\Pi_{n,\delta}\rho^{\otimes n}\Pi_{n,\delta}U_i^{\dagger}-\mathbb{E}(W)}_1+\frac{1-\mu_{n,\delta}}{\mu_{n,\delta}}\norm{\mathbb{E}(W)}_1\nonumber\\
\le&2\epsilon+2\sqrt{\epsilon},
\end{align*}
here
the first inequality is due to the triangle inequality of the $1$-norm, the second inequality is due to (\ref{fg}) and (\ref{ff}).
Note that $\epsilon$ and $\delta$ in the above proof can be chosen arbitrarily small, $R$ is achievable if $R\ge  \mathcal{I}_r^{\infty}(\rho)$,
hence, we finish the proof.

\subsection{Mathematical Tools}\label{appm}
In this subsection, we present necessary knowledge needed in the proof of the first two subsections of this appendix.

\subsubsection{Typical Sequences and Subspaces}
Next we recall the definitions and properties of typical subspaces \cite{thomas2006elements}. Assume $X$ is a discrete random variable with finite alphabet $\mathcal{X}$ and $p_x=P(X=x)$ when $x\in\mathcal{X}$. A sequence $\vec{x}=(x_1,x_2,\cdots,x_n)$ is denoted as $\delta$\emph{-typical with respect to }$\{p_x\}$ if it owns the following property,
\begin{align*}
2^{-n(H(X)+\delta)}\le\Pi_{i=1}^Np_i\le 2^{-n(H(X)+\delta)},
\end{align*}

where $H(X)=-\sum_xp_x\ln p_x$ is the Shannon entropy of $X$. The set of all $\delta$\emph{-typical sequences} is called the $\delta$\emph{-typical set}, which is denoted as $\mathcal{T}_{n,\delta}$. The set $\mathcal{T}_{n,\delta}$ owns the following property,
\begin{align*}
|\mathcal{T}_{n,\delta}|\le 2^{n(H(X)+\delta)}.
\end{align*}

Moreover, based on the wek law of large numbers, for arbitrary positive $\epsilon$, $\delta$, and sufficiently large $n$.

\begin{align*}
P((X_1,X_2,\cdots,X_n)\in\mathcal{T}_{n,\delta})\ge 1-\epsilon.
\end{align*}

Assume $\rho$ is a state acting on the system $\mathcal{H}$, $\rho=\sum_ip_i\ket{i}\bra{i}$ The $\delta$-typical subspace $\mathcal{H}_{n,\delta}\subset \mathcal{H}^{\otimes n}$ with respect to $\rho$ is 
\begin{align*}
\mathcal{H}_{n,\delta}=span\{\ket{i_1i_2\cdots i_n}\in\mathcal{H}^{\otimes n}|(x_1,x_2,\cdots,x_n)\in\mathcal{T}_{n,\delta}\},
\end{align*}

where $\mathcal{T}_{n,\delta}$ is the $\delta$-typical set with respect to $\{p_i\}$. Assume $\Pi_{n,\delta}$ is the projection of $\mathcal{H}_{n,\delta}$, then
\begin{align*}
2^{n(S(\rho)-\delta)}\Pi_{n,\delta}\rho^{\otimes n}\Pi_{n,\delta}\le \Pi_{n,\delta},\\
\dim\mathcal{H}_{n,\delta}\le 2^{n(S(\rho)+\delta)}.
\end{align*}
For any $\epsilon,\delta>0,$ and sufficiently large $n$, we have
\begin{align*}
\tr[\Pi_{n,\delta}\rho^{\otimes n}\Pi_{n,\delta}]=\sum_{\vec{x}\in\mathcal{T}_{n,\delta}}p_{\vec{x}}\ge1-\epsilon.
\end{align*}

\subsubsection{Necessary Lemmas}

\begin{Lemma}[Gentle Measurement Lemma]\cite{winter1999coding}\label{l5}
	Assume $\rho$ is a density matrix with $\tr\rho\le 1$ acting on a finite dimensional system, then for any $\epsilon\ge 0$ such that $0\le X\le \mathbb{I}$ and $\tr(\rho X)\ge 1-\epsilon$, we have
	\begin{align*}
		\norm{\rho-\sqrt{X}\rho\sqrt{X}}\le 2\sqrt{2}\epsilon.
	\end{align*}
\end{Lemma}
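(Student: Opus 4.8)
The plan is to reduce the claim to a two--dimensional linear--algebra estimate by purifying $\rho$ and invoking monotonicity of the trace norm under partial trace; this is essentially Winter's original argument. First I would pick a (possibly subnormalized) purification $|\psi\rangle\in\mathcal{H}\otimes\mathcal{H}'$ of $\rho$, so that $\tr_{\mathcal{H}'}|\psi\rangle\langle\psi|=\rho$ and $\langle\psi|\psi\rangle=\tr\rho$, and set $|\phi\rangle:=(\sqrt{X}\otimes\mathbb{I})|\psi\rangle$. Since $\sqrt{X}$ is Hermitian one gets $\tr_{\mathcal{H}'}|\phi\rangle\langle\phi|=\sqrt{X}\rho\sqrt{X}$, and as the partial trace is trace--preserving and completely positive it does not increase the $1$--norm, so
\begin{align*}
\norm{\rho-\sqrt{X}\rho\sqrt{X}}_1\le\norm{\,|\psi\rangle\langle\psi|-|\phi\rangle\langle\phi|\,}_1 .
\end{align*}
It therefore suffices to bound the right--hand side.

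Next I would use that $A:=|\psi\rangle\langle\psi|-|\phi\rangle\langle\phi|$ is Hermitian with range contained in $\mathrm{span}\{|\psi\rangle,|\phi\rangle\}$, so it has at most two nonzero eigenvalues and $\norm{A}_1$ is computed from the $2\times2$ matrix of $A$ on that subspace. Put $p:=\langle\psi|\psi\rangle=\tr\rho$, $q:=\langle\phi|\phi\rangle=\tr(\rho X)$ and $r:=\langle\psi|\phi\rangle=\tr(\rho\sqrt{X})$, the last being real and nonnegative because $\rho$ and $\sqrt{X}$ are positive semidefinite. Then $\tr A=p-q$ directly, while writing out $A$ in an orthonormal basis of $\mathrm{span}\{|\psi\rangle,|\phi\rangle\}$ whose first vector is proportional to $|\psi\rangle$ gives $\det A=r^{2}-pq$, which is $\le0$ by Cauchy--Schwarz ($r\le\sqrt{pq}$). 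For a $2\times2$ Hermitian matrix with nonpositive determinant the trace norm equals $\sqrt{(\tr A)^{2}-4\det A}$, hence
\begin{align*}
\norm{\rho-\sqrt{X}\rho\sqrt{X}}_1\le\sqrt{(p-q)^{2}+4(pq-r^{2})} .
\end{align*}

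Finally I would estimate the right--hand side from the hypotheses. Because $0\le X\le\mathbb{I}$, every eigenvalue $x$ of $X$ satisfies $x\le\sqrt{x}$, so $X\le\sqrt{X}$ as operators and hence $r=\tr(\rho\sqrt{X})\ge\tr(\rho X)=q$, which yields $pq-r^{2}\le pq-q^{2}=q(p-q)$. Moreover $p-q=\tr(\rho(\mathbb{I}-X))\ge0$ and $p-q=\tr\rho-\tr(\rho X)\le1-(1-\epsilon)=\epsilon$, while $q\le1$, so $q(p-q)\le\epsilon$; altogether $(p-q)^{2}+4(pq-r^{2})\le\epsilon^{2}+4\epsilon$. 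Since $\sqrt{\epsilon^{2}+4\epsilon}\le\sqrt{8\epsilon}=2\sqrt{2\epsilon}$ for $0\le\epsilon\le4$, and for $\epsilon>4$ the assertion is trivial as $\norm{\rho-\sqrt{X}\rho\sqrt{X}}_1\le\tr\rho+\tr(\rho X)\le2<2\sqrt{2\epsilon}$, this completes the argument (the constant obtained being that of Winter's original, i.e.\ $2\sqrt{2\epsilon}$). The one place I expect real care is the $2\times2$ reduction and fixing the sign of $\det A$; everything else is routine bookkeeping. An essentially equivalent route would first record the fidelity bound $F(\rho,\sqrt{X}\rho\sqrt{X})\ge\tr(\rho\sqrt{X})\ge1-\epsilon$ — again cleanest via the purification above together with Uhlmann's theorem — and then apply the standard trace--distance/fidelity inequality, at the cost of slightly heavier bookkeeping for subnormalized operators.
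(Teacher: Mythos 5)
The paper does not actually prove this lemma: it is imported verbatim from Winter \cite{winter1999coding} and used as a black box, so there is no in-paper argument to compare yours against. Judged on its own, your proof is correct and is essentially Winter's original route: purify $\rho$, use contractivity of the trace norm under the partial trace, observe that $\ketbra{\psi}{\psi}-\ketbra{\phi}{\phi}$ is Hermitian of rank at most two, compute its trace norm as $\sqrt{(\tr A)^2-4\det A}$ with $\det A=r^2-pq\le 0$ (your determinant computation and the Cauchy--Schwarz sign argument are right), and then bound $(p-q)^2+4(pq-r^2)\le\epsilon^2+4\epsilon\le 8\epsilon$ using $X\le\sqrt{X}$, $0\le p-q\le\epsilon$ and $q\le 1$; the $\epsilon>4$ edge case is correctly disposed of. The fidelity/Uhlmann detour you mention at the end would also work but is not needed.

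One point worth flagging: what you establish is $\norm{\rho-\sqrt{X}\rho\sqrt{X}}_1\le 2\sqrt{2\epsilon}=\sqrt{8\epsilon}$, whereas the lemma as printed in the paper reads $2\sqrt{2}\,\epsilon$. As printed it is false for small $\epsilon$: take $\rho=\ketbra{0}{0}$ and $X=\ketbra{v}{v}$ with $\ket{v}=\sqrt{1-\epsilon}\ket{0}+\sqrt{\epsilon}\ket{1}$, for which $\tr(\rho X)=1-\epsilon$ but $\norm{\rho-\sqrt{X}\rho\sqrt{X}}_1=\sqrt{\epsilon^2+4\epsilon(1-\epsilon)}\approx 2\sqrt{\epsilon}\gg 2\sqrt{2}\,\epsilon$. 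So the paper's statement (and its later use in Eq.~(\ref{fg})) should be read as $2\sqrt{2\epsilon}$, which is exactly the bound you prove; the discrepancy is a typo in the paper rather than a gap in your argument, and it is harmless for the main theorem since the error still vanishes as $\epsilon\to 0$.
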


\begin{Lemma}[Fannes Inequality]\cite{fannes1973continuity}
	Assume $\rho$ and $\sigma$ are two states of a quantum system $\mathcal{H}$ with $\dim\mathcal{H}=d$, when $\norm{\rho-\sigma}_1\le \epsilon$, we have
	\begin{align*}
		|S(\rho)-S(\sigma)|\le \eta(\epsilon)\log d,
	\end{align*}
	where 
	\begin{equation*}
		\eta(x)= 
		\begin{cases}
			x-x\log x & \text{if } x \le \frac{1}{e} \\
			x+\frac{1}{e} & \text{if } x >\frac{1}{e}
		\end{cases}
	\end{equation*}
\end{Lemma}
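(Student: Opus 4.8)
The plan is to run the classical argument that underlies Fannes' bound: reduce the operator inequality to a statement about sorted eigenvalue vectors, then estimate the entropy difference by a sum of increments of the scalar function $\eta_0(x)=-x\log x$ and control that sum through the concavity of $\eta_0$.

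First I would diagonalize $\rho$ and $\sigma$ and list their eigenvalues in decreasing order, $r_1\ge\cdots\ge r_d$ and $s_1\ge\cdots\ge s_d$. Since $S(\rho)=\sum_i\eta_0(r_i)$ and $S(\sigma)=\sum_i\eta_0(s_i)$, this gives
\begin{align*}
\abs{S(\rho)-S(\sigma)}\le\sum_{i=1}^d\abs{\eta_0(r_i)-\eta_0(s_i)}.
\end{align*}
The bridge from the hypothesis to this sum is the spectral perturbation inequality (Mirsky / Lidskii--Wielandt): the $\ell^1$-distance between the decreasingly ordered spectra of two Hermitian operators is at most the trace norm of their difference, so $\sum_{i=1}^d\abs{r_i-s_i}\le\norm{\rho-\sigma}_1\le\epsilon$. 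This is the only genuinely non-elementary ingredient; it can either be cited directly or re-derived from the Courant--Fischer min--max formula together with the variational description of the trace norm. After this step the problem is purely classical.

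Next I would record the modulus of continuity of $\eta_0$ on $[0,1]$: if $a,b\in[0,1]$ and $\abs{a-b}\le\tfrac12$, then $\abs{\eta_0(a)-\eta_0(b)}\le\eta_0(\abs{a-b})$. This follows from concavity of $\eta_0$ with $\eta_0(0)=0$ and monotonicity of $\eta_0$ on $[0,1/e]$ --- the increment of $\eta_0$ over an interval of fixed length is largest when that interval abuts $0$, with equality at $b=0$. Applying it termwise (permissible because each $r_i,s_i\in[0,1]$ and, once $\epsilon\le\tfrac12$, each $\abs{r_i-s_i}\le\epsilon\le\tfrac12$) yields $\sum_i\abs{\eta_0(r_i)-\eta_0(s_i)}\le\sum_i\eta_0(\abs{r_i-s_i})$.

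Finally I would pull out the $\log d$ factor with Jensen's inequality: concavity of $\eta_0$ together with $\sum_i\abs{r_i-s_i}\le\epsilon$ gives
\begin{align*}
\sum_{i=1}^d\eta_0(\abs{r_i-s_i})\le d\,\eta_0\!\Big(\tfrac1d\textstyle\sum_{i=1}^d\abs{r_i-s_i}\Big)\le d\,\eta_0(\epsilon/d)=\epsilon\log d+\epsilon\log\tfrac1\epsilon ,
\end{align*}
where the middle step uses that $\eta_0$ is nondecreasing on the relevant range. For $\epsilon\le 1/e$ one has $\epsilon\log d+\epsilon\log\tfrac1\epsilon\le(\epsilon+\epsilon\log\tfrac1\epsilon)\log d=\eta(\epsilon)\log d$ (using $\log d\ge1$), while for $\epsilon>1/e$ one replaces $\eta_0$ by its nondecreasing concave majorant, which is exactly the linear branch $\eta(\epsilon)=\epsilon+1/e$ in the statement --- this is why $\eta$ is defined piecewise with the break at $1/e$. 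Chaining the three displays yields $\abs{S(\rho)-S(\sigma)}\le\eta(\epsilon)\log d$. The main obstacle is the spectral (Mirsky-type) inequality in the first step; everything afterwards is bookkeeping with the concave function $\eta_0$, the only delicate points being to match the constants to the stated piecewise $\eta$ and to keep the base of the logarithm consistent throughout.
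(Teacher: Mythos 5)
The paper itself offers no proof of this lemma at all---it is imported from Fannes' original paper via the citation and used only in the converse part of Theorem 1---so the only comparison available is with the standard textbook argument, and that is exactly what you reconstruct: Mirsky's $\ell^1$ spectral perturbation bound, the modulus of continuity of $\eta_0(x)=-x\log x$, and concavity/Jensen. For the branch $\epsilon\le 1/e$ your chain is correct (modulo the base-of-logarithm and $\log d\ge 1$ bookkeeping you flag yourself; that fuzziness is partly inherited from the paper's statement, whose $\eta$ is continuous at $1/e$ only for the natural logarithm), and this is the only regime the paper actually uses, since Fannes is invoked there with $\epsilon\to 0$.

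The genuine gap is the branch $\epsilon>1/e$. The termwise estimate $|\eta_0(r_i)-\eta_0(s_i)|\le\eta_0(|r_i-s_i|)$ is valid only when $|r_i-s_i|\le \frac12$ (it fails, e.g., for $a=1$, $b=0.3$), and $\norm{\rho-\sigma}_1$ can be as large as $2$ for density matrices, so your parenthetical ``once $\epsilon\le\frac12$'' silently restricts the proof to $\epsilon\le\frac12$. The proposed fix---replacing $\eta_0$ by its nondecreasing concave majorant---does not repair this: that majorant is the constant cap $1/e$ beyond $x=1/e$ (the linear branch $\epsilon+1/e$ of the statement arises only after the $\epsilon\log d$ term is added), and, more importantly, the step that breaks for large $\epsilon$ is the termwise continuity estimate, which sits before any majorization of the outer function. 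So as written the argument does not deliver the lemma for $\epsilon\in(\frac12,2]$; one needs a separate treatment there, e.g.\ the trivial bound $|S(\rho)-S(\sigma)|\le\log d\le(\epsilon+1/e)\log d$, which suffices once $\epsilon\ge 1-1/e$, plus an explicit handling (say, splitting off the single index with $|r_i-s_i|>\frac12$ and bounding its contribution by the maximum of $\eta_0$) of the remaining window $\frac12<\epsilon<1-1/e$, where matching the stated constants is delicate. Since the paper only ever applies the lemma with small $\epsilon$, this does not affect Theorem 1, but as a proof of the lemma as stated your sketch is incomplete in that range.
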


\begin{Lemma}[Operator Chernoff Bound]\cite{ahlswede2002strong}\label{ocb}
	Assume $X_1,X_2,\cdots,X_N$ are \emph{i. i. d.} random variables taking values in the operator interval $[0,\mathbb{I}]$ with expectation $\mathbb{E}X_i=M\ge \mu \mathbb{I}$, then for $\epsilon\in[0,1]$, and denote $\overline{X}=\frac{1}{N}\sum_{i=1}^NX_i,$
	\begin{align*}
		Pr[\overline{X}\notin [(1-\epsilon)M,(1+\epsilon)M]]\le 2d e^{-\frac{N\mu\epsilon^2}{2}}
	\end{align*}
\end{Lemma}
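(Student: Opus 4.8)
I would reproduce the operator Cram\'er--Chernoff method of Ahlswede and Winter \cite{ahlswede2002strong}. First split the failure event by a union bound: $\overline X\notin[(1-\epsilon)M,(1+\epsilon)M]$ means $\overline X\not\le(1+\epsilon)M$ or $\overline X\not\ge(1-\epsilon)M$, so it suffices to bound each one--sided tail by $d\,e^{-N\mu\epsilon^2/2}$ and add. Since $M=\mathbb{E}X_i\ge\mu\mathbb{I}>0$ is invertible, conjugate everything by $M^{-1/2}$: with $Y_i:=M^{-1/2}X_iM^{-1/2}$ one has $\mathbb{E}Y_i=\mathbb{I}$ and $0\le Y_i\le M^{-1/2}\mathbb{I}M^{-1/2}=M^{-1}\le\mu^{-1}\mathbb{I}$, and because conjugation by an invertible operator preserves the L\"owner order, $\overline X\le(1+\epsilon)M\Leftrightarrow\overline Y\le(1+\epsilon)\mathbb{I}$ and likewise for the lower tail. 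Thus it is enough to treat the normalized problem $\mathbb{E}Y_i=\mathbb{I}$, $Y_i\in[0,\mu^{-1}\mathbb{I}]$; note that the spread $\mu^{-1}$ here is exactly what will produce the factor $\mu$ in the exponent.

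For the upper tail, fix $t>0$. Using $\lambda_{\max}(e^{tA})=e^{t\lambda_{\max}(A)}$ and $\lambda_{\max}(B)\le\tr B$ for $B\ge0$, together with Markov's inequality,
\begin{align*}
\Pr[\overline Y\not\le(1+\epsilon)\mathbb{I}]\le e^{-tN(1+\epsilon)}\,\mathbb{E}\Bigl[\tr\exp\Bigl(t\textstyle\sum_{i=1}^{N}Y_i\Bigr)\Bigr].
\end{align*}
The essential step is to peel the summands off one at a time. By the Golden--Thompson inequality $\tr e^{A+B}\le\tr(e^Ae^B)$ and the independence of $Y_N$ from $Y_1,\dots,Y_{N-1}$, the expectation above is at most $\mathbb{E}\bigl[\tr\bigl(e^{t\sum_{i<N}Y_i}\,\mathbb{E}_{Y_N}e^{tY_N}\bigr)\bigr]$. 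Since $x\mapsto e^{tx}$ is convex it lies below its chord on $[0,\mu^{-1}]\supseteq\mathrm{spec}(Y_N)$, giving the operator inequality $e^{tY_N}\le\mathbb{I}+\mu(e^{t/\mu}-1)Y_N$; taking expectations and using $\mathbb{E}Y_N=\mathbb{I}$ yields $\mathbb{E}_{Y_N}e^{tY_N}\le c_t\mathbb{I}$ with $c_t=1+\mu(e^{t/\mu}-1)$. As $e^{t\sum_{i<N}Y_i}\ge0$, this replaces the last factor by the scalar $c_t$, and iterating down to $\tr\mathbb{I}=d$ gives $\mathbb{E}[\tr\exp(t\sum_iY_i)]\le d\,c_t^{N}$. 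Hence, with $1+x\le e^x$ and $s:=t/\mu$,
\begin{align*}
\Pr[\overline Y\not\le(1+\epsilon)\mathbb{I}]\le d\exp\bigl(N\mu\,(e^{s}-1-s(1+\epsilon))\bigr).
\end{align*}
Optimizing over $s>0$ at $s=\ln(1+\epsilon)$ and applying an elementary estimate on $\epsilon-(1+\epsilon)\ln(1+\epsilon)$ for $\epsilon\in[0,1]$ gives the claimed exponential decay $d\,e^{-N\mu\epsilon^2/2}$. The lower tail is handled symmetrically: run the same argument with $-t$ in place of $t$, so that $\Pr[\overline Y\not\ge(1-\epsilon)\mathbb{I}]\le e^{tN(1-\epsilon)}\mathbb{E}[\tr\exp(-t\sum_iY_i)]$, use the chord bound for $e^{-tx}$ on $[0,\mu^{-1}]$, and optimize at $s=-\ln(1-\epsilon)$, whose exponent $-\epsilon-(1-\epsilon)\ln(1-\epsilon)\le-\epsilon^2/2$. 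Adding the two one--sided bounds yields $2d\,e^{-N\mu\epsilon^2/2}$.

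The only place this argument departs from the classical scalar Chernoff bound is non-commutativity, and I expect the Golden--Thompson "peeling" step to be the sole nontrivial ingredient. Two devices absorb the difficulty: the conjugation by $M^{-1/2}$, which removes the non-commutativity between $\sum_iX_i$ and $M$ (after which the two thresholds become multiples of $\mathbb{I}$), and the Golden--Thompson inequality, which is precisely what lets one control $\tr\exp(t\sum_iY_i)$ even though $\exp(t\sum_iY_i)\ne\prod_i\exp(tY_i)$. Everything else --- the exponential Markov bound, the chord/convexity bound for the matrix moment generating operator, and the one-variable scalar optimization --- is routine and mirrors the scalar proof.
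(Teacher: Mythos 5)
The paper never proves this lemma --- it is imported verbatim from Ahlswede and Winter \cite{ahlswede2002strong} --- so there is no internal proof to compare against; what you propose is precisely the standard Ahlswede--Winter route, and its structure is sound: the union bound over the two tails, the conjugation by $M^{-1/2}$ (legitimate since $M\ge\mu\mathbb{I}>0$ and conjugation preserves the L\"owner order, giving $\mathbb{E}Y_i=\mathbb{I}$ and $0\le Y_i\le\mu^{-1}\mathbb{I}$), the exponential Markov step via $\lambda_{\max}(e^{tS})=e^{t\lambda_{\max}(S)}\le\tr e^{tS}$, the Golden--Thompson peeling with the chord bound $e^{tY}\le\mathbb{I}+\mu(e^{t/\mu}-1)Y$ and $\tr(AB)\le c_t\tr A$ for $A\ge0$, $0\le B\le c_t\mathbb{I}$, yielding $\mathbb{E}[\tr e^{t\sum_iY_i}]\le d\,c_t^N$, and likewise for the lower tail. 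All of these steps are correct as you state them.

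The genuine gap is the final ``elementary estimate'' for the upper tail. Optimizing $e^{s}-1-s(1+\epsilon)$ at $s=\ln(1+\epsilon)$ gives the exponent $-N\mu\,h(\epsilon)$ with $h(\epsilon)=(1+\epsilon)\ln(1+\epsilon)-\epsilon$, and the inequality you need, $h(\epsilon)\ge\epsilon^2/2$, is false for every $\epsilon>0$: since $h(0)=h'(0)=0$ and $h''(\epsilon)=1/(1+\epsilon)<1$, one has $h(\epsilon)<\epsilon^2/2$ (e.g.\ $h(1)=2\ln2-1\approx0.386<0.5$). So your argument delivers $d\,e^{-N\mu h(\epsilon)}$ for the upper tail, not $d\,e^{-N\mu\epsilon^2/2}$, and no choice of $s$ can do better, because already in the scalar Bernoulli case the upper tail is of order $e^{-N\mu h(\epsilon)}$ up to polynomial factors; indeed for $\epsilon$ close to $1$ and small $\mu$ the bound with constant $1/2$ in the natural exponential is not attainable by this method at all (this constant-factor looseness traces back to how the bound is quoted in the literature, not to anything you did wrong structurally). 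The lower tail is fine, since $\epsilon+(1-\epsilon)\ln(1-\epsilon)\ge\epsilon^2/2$. The honest conclusion of your argument is either the divergence-type bound $d\,e^{-N\mu h(\epsilon)}+d\,e^{-N\mu g(\epsilon)}$ with $g(\epsilon)=\epsilon+(1-\epsilon)\ln(1-\epsilon)$, or, using the standard Bennett-type estimate $h(\epsilon)\ge\epsilon^2/(2+2\epsilon/3)\ge3\epsilon^2/8$ on $[0,1]$, the two-sided bound $2d\,e^{-3N\mu\epsilon^2/8}$. For the only use made of the lemma in this paper (achievability with $N=2^{\mathcal{I}_r(\rho^{\otimes n})+3n\delta}$) any fixed positive constant multiplying $N\mu\epsilon^2$ suffices, so the discrepancy affects only the advertised constant, not the application; but as a proof of the lemma exactly as stated, your last step does not close.
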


\begin{Lemma}\label{l6}
	Assume $\rho\in \mathcal{D}(\mathcal{H})$, and $O$ is a covariant-free unitary operation, then
	\begin{align*}
		S(\Theta(O\rho O^{\dagger}))=S(\Theta(\rho)).
	\end{align*}
\end{Lemma}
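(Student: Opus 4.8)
The final statement to prove is Lemma \ref{l6}: for $\rho\in\mathcal{D}(\mathcal{H})$ and $O$ a covariant-free unitary (i.e.\ $O$ real orthogonal with $\Theta(O\cdot O^{\dagger})=O\Theta(\cdot)O^{\dagger}$), we have $S(\Theta(O\rho O^{\dagger}))=S(\Theta(\rho))$.

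The plan is to exploit the covariance identity directly. First I would write $\Theta(O\rho O^{\dagger})=O\Theta(\rho)O^{\dagger}$, which is exactly the hypothesis that $O$ is covariant-free (note that for a real orthogonal $O$ we have $O^{\dagger}=O^T$, so this is the identity $\Theta(O\cdot O^T)=O\Theta(\cdot)O^T$ recorded earlier in Sec.\ \ref{2}). Then I would invoke unitary invariance of the von Neumann entropy: since $\Theta(\rho)$ is a Hermitian, positive semidefinite, trace-one operator (it is $Re(\rho)$, a genuine real symmetric density matrix as noted in the preliminaries), and $O$ is unitary, the operator $O\Theta(\rho)O^{\dagger}$ has the same spectrum as $\Theta(\rho)$, hence $S(O\Theta(\rho)O^{\dagger})=S(\Theta(\rho))$. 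Chaining the two gives $S(\Theta(O\rho O^{\dagger}))=S(O\Theta(\rho)O^{\dagger})=S(\Theta(\rho))$, which is the claim.

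I should make sure the covariance identity genuinely follows from the stated definition of covariant-free. The definition says $\Lambda\circ\Theta=\Theta\circ\Lambda$ for $\Lambda(\cdot)=O(\cdot)O^T$; applied to $\rho$ this reads $O\Theta(\rho)O^T=\Theta(O\rho O^T)$, which is precisely what I need once I note $O^{\dagger}=O^T$ for real orthogonal $O$. It is also worth remarking that $\Theta(\rho)$ and $\Theta(O\rho O^{\dagger})$ are both valid density operators (symmetric part of a density matrix is PSD with unit trace), so applying $S(\cdot)$ to them is legitimate and unitary invariance of $S$ applies verbatim.

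The only conceivable subtlety—really the single step that carries all the content—is the passage from the abstract covariance condition $\Lambda\circ\Theta=\Theta\circ\Lambda$ to the pointwise statement $O\Theta(\rho)O^{\dagger}=\Theta(O\rho O^{\dagger})$; everything after that is the textbook fact that conjugation by a unitary preserves eigenvalues and hence entropy. So I do not expect a genuine obstacle here: the lemma is essentially a one-line consequence of the definition of covariance together with unitary invariance of $S$, and the proof will be correspondingly short.
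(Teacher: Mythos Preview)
Your argument is correct and in fact more direct than the paper's. You use the covariance identity $\Theta(O\rho O^{\dagger})=O\Theta(\rho)O^{\dagger}$ at the outset and then invoke unitary invariance of the von Neumann entropy, which immediately gives the result in one line. The paper instead proceeds through the relative-entropy characterization of $\mathcal{I}_r$: it writes $S(\Theta(\rho))-S(\rho)=\min_{\sigma\in\mathcal{R}}D(\rho\|\sigma)$ and the analogous identity for $O\rho O^T$, then argues that $\min_{\sigma\in\mathcal{R}}D(\rho\|\sigma)=\min_{\sigma'\in\mathcal{R}}D(O\rho O^T\|\sigma')$ because $O$ is real orthogonal (so $\sigma'\mapsto O^T\sigma'O$ is a bijection on $\mathcal{R}$) and $D$ is unitarily invariant; combining this with $S(\rho)=S(O\rho O^T)$ yields the claim. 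Your route is shorter and uses the covariance hypothesis explicitly, whereas the paper's route only uses that $O$ is real orthogonal (for which covariance is in fact automatic, since $(O\rho O^T)^T=O\rho^TO^T$); the paper's approach has the side benefit of making clear that $\mathcal{I}_r$ itself is invariant under real-orthogonal conjugation, but for the bare lemma your proof is the cleaner one.
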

\begin{proof}
	Due to (\ref{f11}) and (\ref{f12}), we have 
	\begin{align}
		S(\Theta(\rho))-S(\rho)=&\min_{\sigma\in \mathcal{R}}D(\rho||\sigma),\label{f14}\\
				S(\Theta(O\rho O^T))-S(O\rho O^T)=&\min_{\sigma^{}\in \mathcal{R}}D(O\rho O^T||\sigma),\label{f15}
	\end{align}
	as $S(\cdot)$ remains the same under the unitary operation, when $O$ is a real orthogonal matrix, 
	\begin{align}
		S(\rho)=S(O\rho O^T).\label{f13}
	\end{align}
	And
	\begin{align}
		\min_{\sigma\in\mathcal{R}}D(\rho||\sigma)=&\min_{\sigma^{'}\in \mathcal{R}}D(\rho||O^T\sigma^{'}O)\nonumber\\
		=&\min_{\sigma^{'}\in \mathcal{R}}D(O\rho O^T||\sigma^{'}),\label{f16}
	\end{align}
	here the last equality is due to that 
	\begin{align*}
		D(\rho||O^T\sigma^{'}O)=&\tr \rho\log\rho-\tr\rho\log O^T\sigma^{'}O\\
		=&\tr O\rho O^T\log O\rho O^T-\tr O\rho O^T\log \sigma^{'}\\
		=&D(O\rho O^T||\sigma^{'}).
	\end{align*}
	At last, combing (\ref{f14}),(\ref{f15})  and (\ref{f16}), we have 
	\begin{align*}
		S(\Theta(O\rho O^T))=S(\Theta(\rho)).
	\end{align*}
\end{proof}
\begin{Lemma}\label{l4}
	Assume $\rho$ is a state, then 
	\begin{align}
	S(Re(\rho)):=S(\frac{\rho+\rho^T}{2})\ge S(\rho).
	\end{align}
\end{Lemma}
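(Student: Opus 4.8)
The plan is to recognize $Re(\rho)=\frac{\rho+\rho^T}{2}=\Theta(\rho)$ as a uniform mixture of the two states $\rho$ and $\rho^T$, and then simply quote the concavity of the von Neumann entropy.

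First I would check that $\rho^T$ is again a legitimate density operator with $S(\rho^T)=S(\rho)$. This follows from the fact that transposition preserves the characteristic polynomial: $\det(\rho^T-\lambda\mathbb{I})=\det\big((\rho-\lambda\mathbb{I})^T\big)=\det(\rho-\lambda\mathbb{I})$, so $\rho$ and $\rho^T$ have exactly the same spectrum (with multiplicities). Hence $\rho^T\ge 0$, $\tr\rho^T=1$, and $S(\rho^T)=-\sum_k\lambda_k\log\lambda_k=S(\rho)$. In particular $Re(\rho)=\frac12\rho+\frac12\rho^T$ is a convex combination of two density operators, and therefore itself a state, so $S(Re(\rho))$ is well defined.

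Then I would finish by applying the concavity of $S(\cdot)$ to this mixture:
\begin{align*}
S(Re(\rho))=S\!\left(\tfrac12\rho+\tfrac12\rho^T\right)\ge \tfrac12 S(\rho)+\tfrac12 S(\rho^T)=\tfrac12 S(\rho)+\tfrac12 S(\rho)=S(\rho),
\end{align*}
which is exactly the claimed bound.

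I do not expect a genuine obstacle here; the only point worth stating carefully is that $\Theta$ acts by transposition rather than by conjugation with a unitary, so $S(\rho^T)=S(\rho)$ must be justified via the spectrum (equivalently, via $\rho^T=\overline{\rho}$ for Hermitian $\rho$ and the antiunitarity of complex conjugation) and not by invoking unitary invariance of entropy directly. It is also worth noting that equality holds precisely when $\rho=\rho^T$, i.e.\ when $\rho$ is already free, which is consistent with $S(Re(\rho))-S(\rho)=\mathcal{I}_r(\rho)$ from (\ref{f12}).
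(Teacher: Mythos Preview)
Your proof is correct and follows essentially the same route as the paper: establish $S(\rho^T)=S(\rho)$ by showing $\rho$ and $\rho^T$ share the same spectrum (the paper does this via $\rho^T\ket{\overline v}=\overline{\rho\ket{v}}=\lambda\ket{\overline v}$, you via the characteristic polynomial), then apply concavity of $S$ to the mixture $\tfrac12\rho+\tfrac12\rho^T$. Your remark on the equality case is a nice addition but not in the original.
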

\begin{proof}
	As $\rho\ge 0$, then each eigenvalue $\lambda$ of $\rho$ is nonnegative. Let $\lambda$ is arbitrary eigenvalue of $\rho$ with its corresponding eigenvector is $\ket{v},$
	\begin{align*}
	\rho^T\ket{\overline{v}}=&\overline{\rho}\ket{\overline{v}}\\
	=&\overline{\rho\ket{v}}\\
	=&\lambda\overline{\ket{v}},
	\end{align*}
	that is $\lambda$ is also the eigenvalue of $\rho^T$. Hence, $S(\rho^T)=S(\rho).$ Due to the concativity of $S(\cdot),$ we have
	\begin{align*}
	S(\frac{\rho+\rho^T}{2})\ge\frac{S(\rho^T)+S(\rho)}{2}=S(\rho).
	\end{align*}
\end{proof}
 \end{document}